\def\minim{\wedge}
\def\maxim{\vee}
\def\N{{\cal B}}
\def\conv{\otimes}
\def\maxconv{\overline{\otimes}}
\def\maxconv{\,\overline{\otimes}\,}
\def\N{{\cal N}}
\def\Dl{{d\ell}}  
\def\F{{\mathcal{F}}}
\def\T{{\mathcal{T}}}
\def\RR{{\mathbb{R}}}
\def\ZZ{{\mathbb{Z}}}
\newtheorem{theorem}{Theorem}
\newtheorem{lemma}{Lemma}
\newtheorem{corollary}{Corollary}
\begin{document}

\title{A Fluid-Flow Interpretation of SCED Scheduling }

\author{\IEEEauthorblockN{J\"{o}rg Liebeherr}\\
\IEEEauthorblockA{Department of Electrical and Computer Engineering\\
University of Toronto}
}



\maketitle

\begin{abstract}
We show that a fluid-flow interpretation of Service Curve Earliest Deadline First (SCED) 
scheduling  simplifies deadline derivations for this scheduler. 
By exploiting the recently reported isomorphism between min-plus and max-plus 
network calculus and expressing deadlines in a max-plus algebra, 
deadline computations no longer require explicit pseudo-inverse computations. 
SCED deadlines are provided for latency-rate as well as a class of piecewise linear 
service curves.

\end{abstract}


\section{Introduction}
\label{sec:intro}

Service Curve Earliest Deadline First (SCED) \cite{SCED} 
offers an alternative viewpoint on  the design of packet scheduling algorithms. 
The usual approach is to first design a scheduling algorithm and then study or analyze its properties.  
SCED proceeds in the reverse order in that it provides a mechanism to realize a scheduling algorithm with given properties. 
The properties, such as guarantees on rate or delays, are expressed in terms of the concept of {\it service curves} of the network calculus \cite{Cruz95}. 
Given a service curve, SCED computes deadlines for arriving traffic 
and transmits traffic in the order of deadlines.   
As long as no deadline is violated,  the scheduler is guaranteed to satisfy 
the service curve guarantees.  The SCED framework in \cite{SCED} is completed by {\it schedulability conditions} that    
predict whether given service curves can 
be met at a transmission link with bounded (not necessarily fixed) capacity.

Since service curves are arbitrary non-negative  
increasing functions, SCED has a great deal of flexibility for  
offering different service guarantees to traffic flows.  For example, it does not share the well-known  
drawback of weighted fair scheduling algorithms \cite{WFQ-keshav} when  
providing low delays to low-bandwidth traffic. 
(To achieve low delays, fair schedulers must 
increase the guaranteed rate of a flow). Guarantees in SCED provide lower bounds on the service. SCED+ \cite{SCED+} is an extension to guarantee bounds on the delay jitter.  
Scheduling algorithms inspired by SCED, such as Hierarchical Fair Service Curve (HFSC) \cite{HFSC} are widely deployed in the network stack of current operating systems \cite{ALTQ,TC-Linux}. 

Even though SCED appears an ideal vehicle for studying and interpreting scheduling algorithms with  methods of the network calculus, it has not played a major role in recent network calculus research. A closer inspection of SCED provides clues that may offer an explanation for the lack of interest.  First, the original formulation of SCED in \cite{SCED} assumes that all packets have the same  size.
The condition on equal packet sizes is relaxed in \cite[Sec. 2.3.2]{Book-LeBoudec} 
by adding an additional `packetizer' service element. The setup of the analysis in \cite{SCED}  also requires that a packet can 
depart in the same time slot where it arrives. This corresponds to an assumption of `cut-through' switching in a network, whereas most networks perform `store-and-forward' switching. Finally, since SCED operations are described 
within the framework of the min-plus calculus, 
deadlines are expressed as a pseudo-inverse of a traffic function, 
which is not very intuitive. 

In this paper, we will show that the above issues can be resolved when describing SCED in terms of the max-plus network calculus. 
By adopting a fluid-flow interpretation of SCED operations, we 
can exploit the recently established duality between min-plus and max-plus network calculus \cite{Duality} for an analysis of SCED. 
Since max-plus 
expressions are more convenient for computing timestamps, we use them 
for  deadline computations  in SCED. For schedulability conditions, we resort to min-plus expressions, since the corresponding 
max-plus conditions become unwieldy. 

In Sec.~\ref{sec:netcalc} we briefly discuss 
network calculus concepts used in this paper. In Sec.~\ref{sec:fluidSCED} we discuss SCED operations in terms of max-plus algebra expressions. In Sec.~\ref{sec:sched}
we derive schedulability conditions for the fluid-flow SCED scheduler. 
In Sec.~\ref{sec:deadline} we address the computation of SCED deadlines in fluid-flow SCED. In Sec.~\ref{sec:packet} we address deadline computations in a  
packet-level system.

\section{Duality of Min-Plus and Max-Plus Network Calculus}
\label{sec:netcalc}

The continuous-time min-plus network calculus conducts an analysis of network elements within a 
$(\F_o, \minim, \conv)$ dioid algebra, where $\F_o$ is the set of 
left-continuous, non-decreasing 
functions $F: \RR \to \RR^+_o \cup \{+\infty\}$, with  $F (t) = 0$ if $t \le 0$, 
the $\minim$-operation is a pointwise minimum, and $\conv$ is the min-plus convolution, which 
is defined as $F \conv G (t) = 
\inf_{0 \leq s \leq t} \left\{ F(s) + G(t-s) \right\}$ for two functions $F, G \in \F_o$.
The cumulative amount of arrivals and  departures at a network element in the time interval $[0,t)$ is given by $A(t)$ and $D(t)$, respectively, with 
$A, D\in \F_o$.
The available service at a network element is expressed in terms of a function $S \in \F_o$, referred to as {\it minimum service curve}, 
which satisfies $D (t) \ge A \conv S(t)$ for all $t$. 
When arrivals are bounded by a function $E \in \F_o$, such that $E (s) \ge A (t+s) - A(t)$ for all $s$ and $t$, we say that~$E$ is a {\it traffic envelope} for $A$. 

Functions in the  max-plus network calculus compute the time of an arrival or departure event 
for a given number of bits. The continuous-space version uses a $(\T_o, \maxim, \maxconv)$ dioid, 
where $\T_o$ is the set of 
right-continuous, non-decreasing 
functions \mbox{$F: \RR \to \RR^+_o \cup \{-\infty\} \cup \{+\infty\}$}, with  $F (\nu) = -\infty$ if $\nu < 0$ and $F (\nu) \ge 0$ if $\nu \ge 0$. 
The $\maxim$-operation is a pointwise maximum, and $\maxconv$ is the max-plus convolution, with  $F \maxconv G (\nu) = 
\sup_{0 \leq \kappa \leq \nu} \left\{ F(\kappa) + G(\nu-\kappa) \right\}$ for two functions $F, G \in \F_o$. 
Arrivals and departures are described by functions $T_A \in \T_o$ and $T_D \in \T_o$. 
Here,  $T_A (\nu)$ is the arrival time of bit $\nu$, where bit values are allowed to be real numbers. 
A minimum service curve is a function $\gamma_S \in \T_o$ such that $T_D (\nu) \le T_A \maxconv \gamma_S(\nu)$ for all $\nu$, and a traffic envelope $\lambda_E \in \T_o$ for an arrival 
time function satisfies 
$\lambda_E (\mu) \le T_A (\nu+\mu) - T_A(\nu)$ for all $\nu$ and~$\mu$. 

As shown in \cite{Duality}, there exists an isomorphism between the min-plus and max-plus network 
calculus via the  pseudo-inverse functions 
\begin{eqnarray*}
F^{\,\downarrow} (y) &  = \ \inf \left\{  x \mid  F(x) \ge y \right\}    & = \ \sup \left\{  x \mid  F(x) < y \right\} \,  , 
\label{max-plus-survey:eq-define-lower-pseudo-inverse} \\
F^{\,\uparrow} (y)  & = \ \sup \left\{ x  \mid  F(x) \le y \right\}   & = \ \inf \left\{ x  \mid  F(x) > y \right\} \, ,  
\end{eqnarray*}
where $F^{\,\downarrow}$ is referred to as lower pseudo-inverse and  $F^{\,\uparrow}$ as
upper pseudo-inverse. The pseudo-inverses establish the following relationships: 
\begin{itemize} 
\item  $F \in {\cal F}_o $ $\Rightarrow$  $F^{\,\uparrow} \in {\cal T}_o$.
\item $F \in {\cal T}_o $ $\Rightarrow$   $F^{\,\downarrow} \in {\cal F}_o$ 
\item $F^{\,\downarrow}$ is left-continuous and $F^{\,\uparrow}$ is right-continuous.
\item $F$ is left-continuous $\Rightarrow$
$ F =   \bigl(F^{\,\uparrow}  \bigr)^{\,\downarrow}$.
\item $F$ is right-continuous $\Rightarrow$
$F =   \bigl(F^{\,\downarrow}  \bigr)^{\,\uparrow}$.
\end{itemize}
With the pseudo-inverses, we can map operations between the min-plus and max-plus network 
calculus by 
\begin{itemize} 
\item \( 
\bigl(F \minim  G \bigr)^{\,\uparrow} (\nu)= 
F^{\,\uparrow} \maxim G^{\,\uparrow}  (\nu)
\).

\item 
\(
\bigl(F \conv G \bigr)^{\,\uparrow} (\nu) =  
F^{\,\uparrow}\maxconv G^{\,\uparrow}  (\nu)
\).


\item 
\( \bigl(F + G \bigr)^{\,\uparrow} (\nu)   
=  \displaystyle \inf_{0 \le \kappa \le \nu} \max \bigl\{F^{\,\uparrow} (\kappa), G^{\,\uparrow}(\nu-\kappa)\bigr\}  
\).
\end{itemize}
For mapping in the other direction we have 
\begin{itemize} 
\item 
\(
 \bigl(F \maxim  G \bigr)^{\,\downarrow} (t) = 
 F^{\,\downarrow} \minim G^{\,\downarrow}  (t) \, . 
\)

\item 
\(
\bigl(F \maxconv G \bigr)^{\,\downarrow} (t)  =  
F^{\,\downarrow}\conv G^{\,\downarrow}  (t) \, . 
\)


\item 
\(    
  \displaystyle \Bigl( \inf_{0 \le s \le t} \max \bigl\{F (s), G(t-s)\bigr\} \Bigr)^{\,\downarrow} = F^{\,\downarrow} (t) + G^{\,\downarrow}  (t) \, . 
\)

\end{itemize} 
With this, we can set 
$A \equiv T_A^\downarrow$ and $D \equiv T_D^\downarrow$, as well as $T_A\equiv A^\uparrow$ and $T_D \equiv D^\uparrow$.
Service curves and traffic envelopes are related as follows:
\begin{itemize} 
\item 
\(
D (t) \ge A \conv S(t) \, , \forall t  \Rightarrow T_D (\nu) \le T_A \maxconv S^\uparrow (\nu) \, , \forall \nu
\). 

\item 
\(
E (s) \ge A (t+s) - A(t) \, , \forall t , s  \Rightarrow E^\uparrow (\mu) \le T_A (\nu+\mu) - T_A(\nu)
 \, , \forall \nu , \mu
\). 

\item 
\(
T_D (\nu) \le T_A \maxconv \gamma_S (\nu)   \, , \forall \nu\Rightarrow D (t) \ge A \conv \gamma_S^\downarrow (t) \, , \forall t  
\). 

\item 
\(\lambda_E (\mu) \le T_A (\nu+\mu) - T_A(\nu)
 \, , \forall \nu , \mu \Rightarrow \lambda_E^\downarrow (s) \ge A (t+s) - A(t) \, , \forall t , s   
\). 
\end{itemize} 
With our convention to use $S$ and $E$ for service curves and traffic envelopes in the min-plus network 
calculus, and $\gamma_S$ and $\lambda_E$ in the max-plus network calculus, we can set $S \equiv \gamma_S^\downarrow$ and $\gamma_S \equiv S^\uparrow$, as well as 
$E \equiv \lambda_E^\downarrow$ and $\lambda_E \equiv E^\uparrow$. 

As argued in \cite{Duality}, there is no isomorphism when the min-plus network calculus 
is defined in discrete~time ($t \in \ZZ$) or the max-plus calculus is defined in  
discrete~space ($\nu \in \ZZ$). It also does not exist for a packet-level characterization of traffic.  Hence, to exploit the above relationships within SCED, we must resort to a 
fluid-flow description of traffic, where time and space are expressed by non-negative real numbers.  

\section{Fluid-flow SCED}
\label{sec:fluidSCED}

The objective of SCED is  a scheduling mechanism that can realize any given minimum service curve.  The basic idea is to assign arriving traffic a deadline equal to the 
latest departure time permitted by the given service curve. 
As long as all traffic departs before the expiration of the assigned deadlines, the service curve is guaranteed to hold. 
\begin{figure}[!t]
\centering 
\subfigure[Space-domain view.]{\includegraphics[width=0.49\columnwidth]{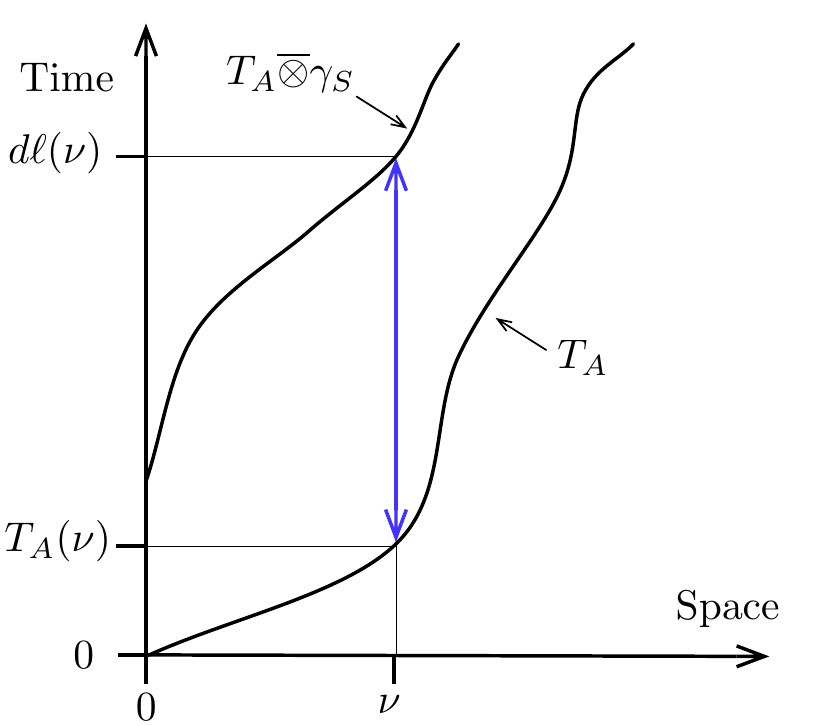}\label{fig:SCED-spacedomain}}%
%
\subfigure[Time-domain view.]{\includegraphics[width=0.49\columnwidth]{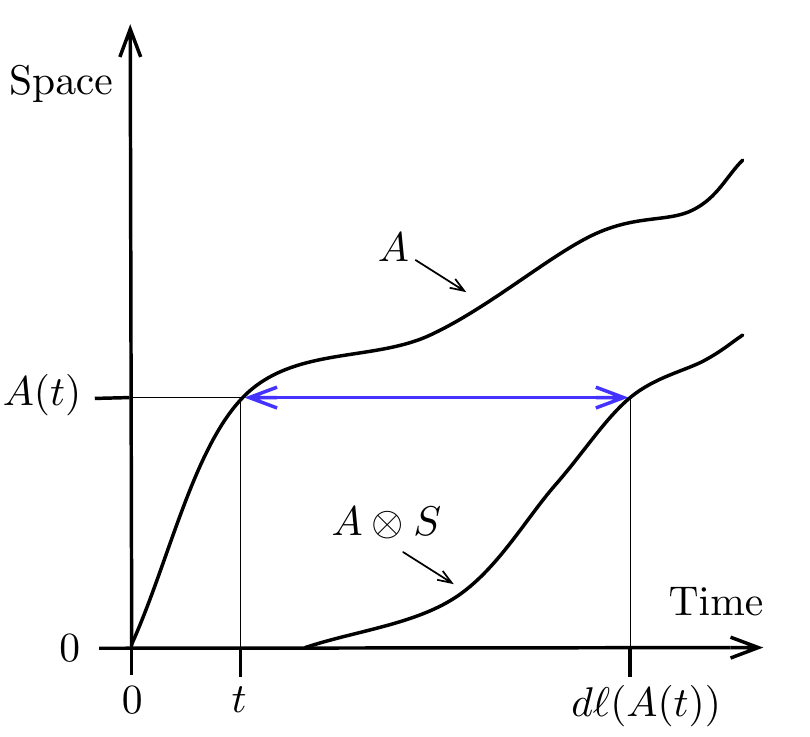}\label{fig:SCED-timedomain}}
\caption{SCED Deadlines.}
\label{fig:SCED-deadlines}
\vspace{-10pt}
\end{figure}


We first discuss the deadline assignment from the perspective of the max-plus algebra.  
We consider a fluid-flow version of SCED, where each  bit value $\nu \in \RR_o^+$ is assigned a deadline $\Dl(\nu)$. All traffic is transmitted in the order of deadlines. The deadline assignment is illustrated in 
Fig.~\ref{fig:SCED-spacedomain}. Bit $\nu$ with arrival time $T_A(\nu)$ is assigned 
the deadline 
\begin{align}
\Dl (\nu) = T_A \maxconv \gamma_S (\nu) \, , 
\label{eq:Deadline-assign}
\end{align}
where $\gamma_S$ is a max-plus minimum service curve. This gives  
the equivalency
\begin{align}
T_D (\nu)\le \Dl (\nu) 
\qquad \Longleftrightarrow \qquad
T_D (\nu) \le T_A \maxconv \gamma_S (\nu) \, . 
\label{eq:Deadline-scerv-equiv}
\end{align}
Hence, if all traffic departs 
by its deadline, $\gamma_S$ is a minimum service curve. 
Conversely, if $\gamma_S$ is a minimum service curve, then there is 
no deadline violation. 

The deadline assignment is more intricate when we describe it in terms of min-plus network calculus expressions. The deadline assignment is sketched in Fig. ~\ref{fig:SCED-timedomain} for a continuous arrival time function $A$. 
The deadline of an arrival just before time $t$,\footnote{Note that 
$A(t)$ does not include  arrivals that occur at time $t$.} denoted 
by $\Dl (A(t))$,  is set to the time after $t$ when $A \conv S$ 
has caught up to $A(t)$. If the departures at time $\Dl (A(t))$, given by $D (\Dl (A(t)))$, are at least $A \conv S (\Dl (A(t)))$, then $S$
satisfies the service curve requirement $D (\Dl (A(t))) \ge A \conv S (\Dl (A(t)))$. 
The computation of the deadline involves the computation of an inverse. More precisely, since neither $A$ nor $A \conv S$ are continuous or strictly increasing, the deadline requires to take a pseudo-inverse. 
By choosing the upper pseudo-inverse, we  recover the deadline 
from~\eqref{eq:Deadline-assign}, since 
\begin{align*}
\Dl (\nu) & = T_A \maxconv \gamma_S (\nu) \\
& = A^{\uparrow} \maxconv S^{\uparrow} (\nu)\\
& = \left( A \conv S \right)^\uparrow (\nu) \\
& = \sup \bigl\{  \tau \mid A \conv S (\tau) \le \nu \bigr\} \, . 
\end{align*}
Then, the deadline for $A(t)$  is given by 
\[
\Dl (A(t))  = \sup \bigl\{  \tau \mid A \conv S (\tau) \le A(t) \bigr\} \, . 
\]
Note that the computation of the pseudo-inverse  
for an arrival time $t$ requires to compute $A \conv S (\tau)$ for values $\tau >t$, which appears to assume knowledge of future arrivals (after time $t$). Fortunately, this is not the case, 
since for $\tau > t$, 
\[
 A \conv S (\tau) = \inf_{0 \leq s \leq \tau} \left\{ A(s) + S(\tau-s) \right\} \le A (t) 
\]
if and only if
\[
\inf_{0 \leq s \leq t} \left\{ A(s) + S(\tau-s) \right\} \le A (t) \, . 
\]
Despite the additional complexity of deadline computations in a min-plus setting, all discussions of SCED in the literature 
\cite{SCED,SCED+,Book-Chang,Book-LeBoudec,SariowanPhD} have chosen a min-plus formulation. 
Interestingly, the computations in these works use  
the lower pseudo-inverse for the computation of deadlines.

\section{Schedulability Condition of Fluid-Flow SCED}
\label{sec:sched}

In this section we derive a schedulability condition that determines whether a SCED scheduler at a link with variable transmission rate 
can support a set of min-plus or max-plus service curves for a set of flows.  
Our derivations will use the min-plus network calculus, since the max-plus version of the schedulability condition is generally  not useful for practical computations 
(as shown below). 

We consider a set $\cal N$ of flows. Let $A_j$ and $D_j$ denote the time-domain arrival and departure functions of flow $j \in {\cal N}$. The functions $T_{A_j}=A_j^\downarrow$ and $T_{D_j} = D_j^\downarrow$
denote the space-domain formulations of arrivals and departures. Arrivals of flow 
$j$ are constrained by a traffic envelope, which is denoted by either $E_j$ or 
$\lambda_j = E_j^\downarrow$. 

We consider a work-conserving link with a time-variable transmission rate.  
We assume that the transmissions of the link can be bounded by a strict service curve $C \in \F_o$ \cite{Book-LeBoudec}, defined by the property that 
for any time interval $(s, t]$ with positive backlog, 
\[
\sum_j \bigl( D_j(t) - D_j(s) \bigr) \ge C (t -s) \, . 
\]
With a constant-rate link, we have $C(t) = c t$ for some  $c >0$.
In the case of packet-level traffic, the transmission of a packet is never 
interrupted, even if a packet arrives with a shorter deadline than 
the packet in transmission. This is referred to as {\it non-preemptive} scheduling. 
In contrast, with {\it preemptive} scheduling, the link always transmits traffic 
with the earliest deadline. 

We are interested in  deriving a  condition that  can determine whether a SCED scheduler  is able to guarantee service curves 
$S_j$ or $\gamma_{S_j}=S_j^\uparrow$ for each flow $j \in {\cal N}$. 
The deadline assignment for each $\nu \ge 0$ is such that 
\begin{align}
\Dl_j (\nu) = T_{A_j} \maxconv \gamma_{S_j} (\nu) = (A_j \conv S_j)^\uparrow (\nu) \, . 
\label{eq:SCED-flow-deadline}
\end{align}

\subsection{Preliminary Results}
We first present preliminary results that will aid in the derivation of the schedulability condition. 
We define $A_j^{<t} (\tau)$ as 
\[
A_j^{<t} (\tau) = \sup \bigl\{ \nu \mid 0 \le T_{A_j} (\nu) < \tau 
\ \text{ and } \ 
{\Dl}_j (\nu) < t \bigr\} \, , 
\]
which are the  arrivals from flow $j$ in the time interval $[0,\tau)$ 
with a deadline less than $t$. We will use the short hand 
$A_j^{<t} (s, t ) = A_j^{<t} (t) - A_j^{<t} (s)$. 
Now, let $t^*$ be the last time before $t$ ($t^* \le t$) when 
the link does not have any backlog from traffic with a deadline before time $t$. 
Also, let $\ell (t^*) \ge 0$ be the untransmitted portion of the packet that is in transmission at time $t^*$. This packet has a deadline 
greater than or equal to~$t$. 
We next relate the function 
$A_j^{<t}$ to deadline violations in SCED. 
\begin{lemma}
\label{lemma:violation}
If a non-preemptive SCED scheduler experiences a deadline violation by time $t$, 
then 
\[
\sum_{j \in {\cal N}} A_j^{<t} (t^*, t)  
+ \ell (t^*) > C (t - t^*) \, . 
\]
\end{lemma}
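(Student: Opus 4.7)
The plan is a standard busy-period argument, tailored to keep track of work whose deadline is below $t$. I would prove the contrapositive style bound: if a deadline violation occurs by time $t$, the link simply cannot have kept up with the deadline-$<t$ work during $(t^*,t]$, even after giving it a ``credit'' of $\ell(t^*)$ for the non-preemptive residual.

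First I would unpack the definition of $t^*$. By its maximality, for every $\tau \in (t^*, t]$ there is strictly positive backlog of traffic with deadline less than $t$, hence strictly positive total backlog. Consequently the strict service curve hypothesis applies on $(t^*,t]$, giving
\[
\sum_{j \in \cal N}\bigl(D_j(t) - D_j(t^*)\bigr) \ge C(t - t^*).
\]
At $t^*$ itself the deadline-$<t$ backlog is zero, so all deadline-$<t$ arrivals before $t^*$ have already departed; equivalently, the cumulative deadline-$<t$ departures at $t^*$ equal $\sum_j A_j^{<t}(t^*)$.

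Next I would decompose the work served on $(t^*,t]$. The packet in service at $t^*$ has deadline $\ge t$ and contributes exactly $\ell(t^*)$ of ``deadline-$\ge t$'' work to the interval. After that packet finishes, the non-preemptive EDF scheduler always picks the earliest-deadline backlogged bit; because the deadline-$<t$ backlog is positive throughout $(t^*,t]$, the scheduler never again picks a deadline-$\ge t$ bit before time $t$. Writing $W^{<t}$ for the deadline-$<t$ work served in $(t^*,t]$, the total-work lower bound therefore refines to
\[
W^{<t} + \ell(t^*) \ge C(t - t^*).
\]

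Finally I would use the deadline violation at time $t$: some deadline-$<t$ bit has not departed, so the cumulative deadline-$<t$ departures by time $t$ are strictly less than $\sum_j A_j^{<t}(t)$. Combined with the equality of arrivals and departures of deadline-$<t$ work at $t^*$, this yields
\[
W^{<t} < \sum_{j \in \cal N} \bigl(A_j^{<t}(t) - A_j^{<t}(t^*)\bigr) = \sum_{j \in \cal N} A_j^{<t}(t^*,t),
\]
and substituting into the previous display gives the claim.

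The main obstacle is not the algebra but the bookkeeping of the non-preemptive case: I need to argue cleanly that the only ``waste'' (service given to deadline-$\ge t$ traffic) during $(t^*,t]$ is exactly the residual $\ell(t^*)$ of the packet already in service at $t^*$, and nothing else — this is where the definition of $t^*$ as the \emph{last} zero-backlog instant of deadline-$<t$ traffic is doing the real work, because it rules out the scheduler ever switching back to a deadline-$\ge t$ packet later in the interval.
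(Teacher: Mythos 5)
Your proof is correct and follows essentially the same busy-period argument as the paper, which reasons informally that the deadline-$<t$ workload arriving in $[t^*,t)$ plus the non-preemptive residual $\ell(t^*)$ must exceed the guaranteed capacity $C(t-t^*)$. Your version is in fact more careful than the paper's sketch: you explicitly invoke the strict service curve on the positive-backlog interval, account for why the only deadline-$\ge t$ work served is $\ell(t^*)$, and use the violation to bound the served deadline-$<t$ work strictly below the arrivals.
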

\begin{proof}
Let us first ignore that packet transmissions cannot be preempted. 
If we have a deadline violation by time $t$,  the 
amount of traffic with a deadline before $t$ exceeds the transmission 
capacity of the link. Since there is no traffic at the link at time 
$t^*$ with a deadline before $t$, we can ignore all arrivals 
and transmissions before $t^*$. 
Then, the arrivals from flow $j$ in $[t^*, t)$ with a deadline 
before $t$ is given by $A_j^{<t} (t^*, t)$. The least available transmission 
capacity of the link  in $[t^*, t)$ is given by $C (t - t^*)$. Therefore, a deadline violation by~$t$ implies that 
$\sum_{j \in {\cal N}} A_j^{<t} (t^*, t)  
 > C (t - t^*)$. Without packet preemption, the remaining part of 
 the packet in transmission at time $t^*$, $\ell (t^*)$,  is added to the workload 
 that must be transmitted before $t$, which yields the claim. 
\end{proof}

The next lemma provides an interesting property of the function $A_j^{<t}$. 
\begin{lemma}\label{lemma:arrivals}
For all $t \ge 0$, we have $A_j^{<t} (t) = A_j \conv S_j (t)$. 
\end{lemma}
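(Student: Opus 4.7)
The plan is to unpack the two sides of the claimed identity and show that the defining conditions on $\nu$ coincide. The right-hand side, $A_j \conv S_j(t)$, is a point value of a function in $\F_o$, so it is left-continuous and non-decreasing. The left-hand side is a supremum over bit values $\nu$ satisfying two conditions; the goal is to show that these conditions cut out exactly the half-line $\{\nu\ge 0 \mid \nu < A_j \conv S_j(t)\}$, whose supremum is $A_j \conv S_j(t)$.

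First, I would expand the deadline using \eqref{eq:SCED-flow-deadline} and the definition of the upper pseudo-inverse from Section~\ref{sec:netcalc}:
\[
\Dl_j(\nu) \;=\; (A_j \conv S_j)^{\,\uparrow}(\nu) \;=\; \sup\bigl\{\tau \mid A_j \conv S_j(\tau) \le \nu\bigr\} \;=\; \inf\bigl\{\tau \mid A_j \conv S_j(\tau) > \nu\bigr\}.
\]
Then I would translate the strict inequality $\Dl_j(\nu) < t$ into a pointwise condition on $A_j \conv S_j(t)$. Since $A_j \conv S_j \in \F_o$ is left-continuous and non-decreasing, $\sup_{\tau<t} A_j \conv S_j(\tau) = A_j \conv S_j(t)$, and so $\Dl_j(\nu) < t$ if and only if $\nu < A_j \conv S_j(t)$.

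Second, I would argue that the side condition $0 \le T_{A_j}(\nu) < t$ in the definition of $A_j^{<t}(t)$ is automatically satisfied once $\nu \ge 0$ and $\Dl_j(\nu) < t$, so it does not restrict the set further. Plugging $s=t$ into the infimum defining $A_j \conv S_j(t)$ and using $S_j(0)=0$ gives $A_j \conv S_j(t) \le A_j(t)$. Hence $\nu < A_j \conv S_j(t) \le A_j(t)$, and by the upper pseudo-inverse relation $T_{A_j}=A_j^{\,\uparrow}$ this forces $T_{A_j}(\nu) < t$. The lower bound $T_{A_j}(\nu)\ge 0$ is immediate from $A_j(0)=0\le \nu$.

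Putting the two observations together,
\[
A_j^{<t}(t) \;=\; \sup\bigl\{\nu \ge 0 \mid \nu < A_j \conv S_j(t)\bigr\} \;=\; A_j \conv S_j(t),
\]
which is the claim. The main subtlety I anticipate is not the convolution itself but tracking strict versus non-strict inequalities when moving between the pseudo-inverse $\Dl_j = (A_j \conv S_j)^{\,\uparrow}$ and pointwise values of $A_j \conv S_j$; the clean resolution comes from left-continuity of functions in $\F_o$, exactly the property invoked in the bulleted duality relations of Section~\ref{sec:netcalc}.
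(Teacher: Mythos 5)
Your proof is correct and takes essentially the same route as the paper: where you unpack the equivalence $\Dl_j(\nu)<t \Leftrightarrow \nu < A_j\conv S_j(t)$ by hand via left-continuity, the paper simply recognizes $\sup\{\nu \mid \Dl_j(\nu)<t\}$ as the lower pseudo-inverse $\bigl((A_j\conv S_j)^{\,\uparrow}\bigr)^{\,\downarrow}(t)$ and invokes the identity $(F^{\,\uparrow})^{\,\downarrow}=F$ for left-continuous $F\in\F_o$. Your justification for dropping the side condition $T_{A_j}(\nu)<t$ (via $A_j\conv S_j\le A_j$) is just the min-plus dual of the paper's observation that $\Dl_j(\nu)\ge T_{A_j}(\nu)$, so the two arguments coincide in substance.
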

\begin{proof}
Setting $t = \tau$ in the definition of $A_j^{<t} (\tau)$, we get 
\[
A_j^{<t} (t) = \sup \bigl\{ \nu \mid  T_{A_j} \maxconv \gamma_{S_j} (\nu) < t \bigr\} \, , 
\]
since $F \maxconv G (\nu) \ge F (\nu)$ for $F, G \in \T_o$. Writing deadlines in 
terms of the min-plus algebra, we obtain 
\begin{align*}
A_j^{<t} (t) & = \sup \bigl\{ \nu \mid  (A_j \conv S_j)^\uparrow (\nu) < t \bigr\} \\
& = \bigl( (A_j \conv S_j)^\uparrow \bigr)^\downarrow (t) \\
& = A_j \conv S_j (t) \, , 
\end{align*}
where the second line uses the lower-pseudo inverse, and 
the last line follows from $ F = ( F^\uparrow)^\downarrow$ if $F \in \F_o$. 
\end{proof}

\subsection{Main Result}

In this section we prove schedulability 
conditions for the fluid-flow SCED scheduler.

\begin{theorem} \label{theorem:SCED-sched}
A non-preemptive SCED scheduler with a set~$\cal N$ of flows as discussed at the 
beginning of this section guarantees the service curves $\{S_j\}_{j \in {\cal N}}$ 
if for all $t \ge 0$ 
\begin{align}
\sum_{j \in {\cal N}} E_j \conv S_j (t) \le \left[ C (t) - \ell_{\max} \right]^+\, , 
\label{eq:SCED-sched-cond}
\end{align}
where $\ell_{\max}$ is the maximum packet size and $[x]^+ = \max \{ x , 0\}$.
\end{theorem}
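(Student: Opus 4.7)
My plan is to prove Theorem~\ref{theorem:SCED-sched} by contradiction. Assume the condition~\eqref{eq:SCED-sched-cond} holds at every non-negative argument, yet some arrival misses its deadline by time $t$ in the non-preemptive SCED scheduler. With $t^*$ and $\ell(t^*)$ as defined just above Lemma~\ref{lemma:violation}, that lemma yields
\[
\sum_{j \in {\cal N}} A_j^{<t}(t^*, t) + \ell(t^*) > C(t - t^*) \, .
\]
Bounding $\ell(t^*) \le \ell_{\max}$, my target becomes to show, flow by flow, that $A_j^{<t}(t^*, t) \le E_j \conv S_j (t - t^*)$, so that summing produces a violation of the schedulability condition at the non-negative argument $\tau = t - t^*$.

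To establish the flow-wise bound, I would first decompose $A_j^{<t}(t^*, t) = A_j^{<t}(t) - A_j^{<t}(t^*)$ and invoke Lemma~\ref{lemma:arrivals} to replace $A_j^{<t}(t)$ by $A_j \conv S_j (t)$. Because $\Dl_j (\nu) \ge T_{A_j}(\nu)$ nests the two indicator conditions in the definition of $A_j^{<t}(t^*)$, a careful pseudo-inverse reading identifies $A_j^{<t}(t^*) = \min \{A_j(t^*),\, A_j \conv S_j(t)\}$, so that $A_j^{<t}(t^*, t) \le [A_j \conv S_j(t) - A_j(t^*)]^+$. To convert $A_j$ into the envelope $E_j$, I would restrict the infimum in $A_j \conv S_j(t) = \inf_{0 \le s \le t} \{A_j(s) + S_j(t - s)\}$ to $s \in [t^*, t]$ (which can only increase it), apply the sliding-window envelope $A_j(s) \le A_j(t^*) + E_j(s - t^*)$, and change variables $s' = s - t^*$ to obtain $A_j \conv S_j(t) \le A_j(t^*) + E_j \conv S_j(t - t^*)$. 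Chaining the two steps gives $A_j^{<t}(t^*, t) \le E_j \conv S_j (t - t^*)$.

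Summing over $j \in {\cal N}$ and substituting back into the inequality from Lemma~\ref{lemma:violation} yields
\[
\sum_{j \in {\cal N}} E_j \conv S_j (t - t^*) > C(t - t^*) - \ell_{\max} \, ,
\]
and since $t^* < t$ whenever a deadline is missed at $t$ (there is positive deadline-$<t$ backlog just before $t$), this directly contradicts~\eqref{eq:SCED-sched-cond} at the non-negative argument $t - t^*$. That completes the contradiction.

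The main obstacle I anticipate is the envelope-to-convolution conversion: one has to restrict the infimum to $s \ge t^*$, invoke the sliding-window envelope, and verify via the substitution $s' = s - t^*$ that what remains is exactly $A_j(t^*) + E_j \conv S_j(t - t^*)$. A subsidiary subtlety is the pseudo-inverse and left/right-continuity bookkeeping needed to justify that the bit indices satisfying both $T_{A_j}(\nu) < t^*$ and $\Dl_j(\nu) < t$ form the interval $[0, \min\{A_j(t^*), A_j \conv S_j(t)\})$, which underpins the identification of $A_j^{<t}(t^*)$.
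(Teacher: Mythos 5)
Your proposal is correct and follows essentially the same route as the paper: contradiction via Lemma~\ref{lemma:violation}, the identity of Lemma~\ref{lemma:arrivals}, and the same restrict-the-infimum/envelope/change-of-variable chain to get $A_j^{<t}(t^*,t) \le E_j \conv S_j(t-t^*)$. The only (harmless) variation is that you evaluate $A_j^{<t}(t^*)$ as $\min\{A_j(t^*),\, A_j\conv S_j(t)\}$ directly from the definition, whereas the paper asserts $A_j^{<t}(t^*) = A_j(t^*)$ via the monotonicity of deadlines; both lead to the same bound.
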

Note that the condition 
requires that $S_j (t) = 0$ for $t \le C^{\uparrow} (\ell_{\max})$.   
This can be ensured by `appending' a delay element with 
service curve $\delta_{C^{\uparrow} (\ell_{\max})}$ to a given service curve $\gamma$ 
via $\gamma \conv \delta_{C^{\uparrow} (\ell_{\max})}$. 

\medskip
With preemptive scheduling we set $\ell_{\max}=0$. 
We point out that the schedulability condition of preemptive 
SCED when expressed in the max-plus algebra \cite[see Corollary~12.5]{Duality} is 
\[ 
\inf_{\substack{\nu_1, \ldots , \nu_N\\ \nu = \nu_1 + \ldots + \nu_N}}\max_{j = 1, \ldots, N}  E_j^\uparrow \maxconv S_j^\uparrow (\nu_j)  \ge \frac{\nu}{C}\, , \quad \forall \nu \ge 0 \,  . 
\] 
Clearly, this condition is not useful for practical schedulability tests. 
\begin{proof}
According to~\eqref{eq:Deadline-scerv-equiv}, the deadline assignment from \eqref{eq:SCED-flow-deadline} does not 
result in a deadline violation if and only 
if  $\gamma_{S_j} = S_j^\downarrow$ is a max-plus 
service curve for flow $j$ ($T_{D_j} \le T_{A_j} \maxconv \gamma_{S_j}$). 
By the duality properties, $S_j$ is then a min-plus service curve. 
We will show that a deadline violation implies 
that~\eqref{eq:SCED-sched-cond} does not hold. Hence, if~\eqref{eq:SCED-sched-cond} holds, there cannot be a deadline violation. 

Assume that there is a deadline violation before~$t$, and 
let~$t^*$ be as defined above. 
Each flow $ j \in \N$ satisfies
\begin{align}
A_j^{<t} (t^*) = A_j (t^*) 
\label{eq:proof-sched-1}
\end{align}
This a consequence from the fact that earlier arrivals of 
flow~$j$ have an earlier deadline. Since there are arrivals after 
time~$t^*$ with a deadline before $t$, the deadlines 
of all arrivals before~$t^*$ must be less than~$t$. 
We  now derive for flow $j \in \N$
\begin{align*}
A_j^{<t} (t^*, t) & = A_j \conv S_j (t)- A_j (t^*) \\
& = \inf_{0 \leq s \leq t} \left\{ A_j(s) + S_j(t-s) \right\} - A_j (t^*) \\
& \le \inf_{t^* \leq s \leq t} \left\{ A_j(s) + S_j(t-s) \right\} - A_j (t^*) \\
& = \inf_{0\leq s \leq t-t^*} 
\left\{ A_j (t^*+s) - A_j (t^*) + S_j(t-t^*-s) \right\}  \\
& \le \inf_{0\leq s \leq t-t^*} 
\left\{ E_j(s) + S_j(t-t^*-s) \right\} \\
& = E_j \conv S_j (t-t^*) \, . 
\end{align*}
In the first step, we use  Lemma~\ref{lemma:arrivals} and \eqref{eq:proof-sched-1}. The second step simply expands the convolution. The third step 
relaxes the infimum by restricting its range, and the  fourth step makes a 
change of variable. 
The inequality in the fifth step follows since  $E$ is an envelope, 
that is,  $E_j(s) \ge A_j(t^*+s) - A_j (t^*)$, which 
yields the convolution in the last step. 

By Lemma~\ref{lemma:violation}, since 
$E_j \conv  S_j (t) \ge A_j \conv S_j (t)$ and $\ell_{\max} \ge \ell (t^*)$, a deadline violation before $t$ implies that 
\[
\sum_{j \in \N} E_j \conv S_j (t) + \ell_{\max}> C (t)  \, , 
\]
which contradicts~\eqref{eq:SCED-sched-cond}. Thus, we cannot have a deadline violation. Hence, the functions $S_j$ are minimum service curves. 
\end{proof}

The following condition, which follows directly from Theorem~\ref{theorem:SCED-sched}, is useful when no information is available 
on the arrivals. 
\begin{corollary}
Under the assumptions of Theorem~\ref{theorem:SCED-sched}, 
the SCED scheduler guarantees  service curves $\{S_j\}_{j \in {\cal N}}$ if 
for all $t \ge 0$ 
\[
\sum_{j \in {\cal N}} S_j (t) \le \left[ C (t) - \ell_{\max}\right]^+ \, , 
\label{eq:SCED-sched-cond-coro}
\]
\end{corollary}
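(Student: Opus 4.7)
The plan is to reduce the corollary to Theorem~\ref{theorem:SCED-sched} by upper bounding the envelope--service convolution by the service curve itself. Concretely, I would argue that the hypothesis $\sum_{j \in \N} S_j(t) \le [C(t) - \ell_{\max}]^+$ directly implies the theorem's schedulability inequality $\sum_{j \in \N} E_j \conv S_j(t) \le [C(t) - \ell_{\max}]^+$, after which the theorem delivers the service curve guarantees at once.

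The key step is the pointwise bound $E_j \conv S_j(t) \le S_j(t)$ for every $t \ge 0$ and every $j \in \N$. This follows from the definition of $\F_o$ given in Sec.~\ref{sec:netcalc}: since $E_j \in \F_o$, we have $E_j(0) = 0$. Evaluating the min-plus convolution $E_j \conv S_j(t) = \inf_{0 \le s \le t}\{E_j(s) + S_j(t-s)\}$ at the particular choice $s = 0$ then yields
\[
E_j \conv S_j(t) \ \le \ E_j(0) + S_j(t) \ = \ S_j(t) \, .
\]
Summing over $j \in \N$ and chaining with the corollary's assumption completes the reduction.

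There is essentially no obstacle here; the only point worth flagging is that the estimate $E_j \conv S_j \le S_j$ relies solely on the boundary value $E_j(0) = 0$ rather than on any nontrivial envelope information. This is precisely what makes the corollary meaningful as a statement for the case when no arrival constraint is known: one may formally take $E_j$ to be the trivial envelope (infinite for $t > 0$, zero at $t = 0$), and the same bound still applies, so the corollary genuinely covers the no-information regime without relying on Theorem~\ref{theorem:SCED-sched}'s envelope assumption in any essential way.
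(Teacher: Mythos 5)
Your proof is correct and is exactly the direct reduction the paper intends: the bound $E_j \conv S_j(t) \le E_j(0) + S_j(t) = S_j(t)$ (taking $s=0$ in the infimum and using $E_j \in \F_o$, so $E_j(0)=0$) makes the corollary's hypothesis imply the hypothesis of Theorem~\ref{theorem:SCED-sched}. No further comment needed.
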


Since the available transmission capacity of the link in a time interval $(s,t]$ 
may exceed $C (t-s)$, the condition in Theorem~\ref{theorem:SCED-sched} is a sufficient condition. On the other hand, if the link is a fixed-rate work-conserving 
link with exact service curve $C(t) = ct$, such that 
$\sum_j D_j (t) = \sum_j A_j \conv C (t)$ for all  $t \ge 0$, and 
additionally assume that arrivals on a flow may saturate their 
envelopes, that is, $A_j (t) = E_j (t)$,  
we can provide a necessary condition for guaranteeing minimum service curves $\{S_j\}_{j \in {\cal N}}$, which are close to~\eqref{eq:SCED-sched-cond}.

\begin{theorem} \label{theorem:SCED-sched-nec}
Consider a  SCED scheduler that operates at 
a fixed-rate and offers an exact service curve $C(t) = ct$. 
Assume that the arrivals from each flow $j \in \N$ can saturate its 
envelope $E_j$.  
If SCED ensures each flow $j \in {\cal N}$ a minimum service curve $S_j$, 
then,   
for all $t \ge 0$, 
\begin{align}
\sum_{j \in {\cal N}} E_j \conv S_j (t) \le ct   \, . 
\label{eq:SCED-sched-cond-nec}
\end{align}
\end{theorem}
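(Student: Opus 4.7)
The plan is to chain together three simple ingredients: the saturation hypothesis turns each envelope into an actual arrival function, the SCED guarantee lower-bounds each individual departure function by a min-plus convolution, and the fixed-rate exact-service-curve assumption upper-bounds the aggregate departures by $ct$. Unlike the sufficient direction in Theorem~\ref{theorem:SCED-sched}, no deadline accounting or busy-period bookkeeping is needed, because we are assuming the service-curve conclusion and only need to confront it with the link-capacity constraint.

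First I would instantiate an arrival pattern that saturates every envelope, choosing $A_j$ so that $A_j(t) = E_j(t)$ for all $j \in \N$ and all $t \ge 0$; this is precisely what the saturation hypothesis permits, and $A_j \in \F_o$ since $E_j \in \F_o$. Since SCED is assumed to guarantee $S_j$ as a minimum service curve for every admissible arrival pattern, in particular for this one, we obtain the per-flow lower bound
\[
D_j(t) \ \ge \ A_j \conv S_j(t) \ = \ E_j \conv S_j(t) \, , \qquad \forall t \ge 0, \ j \in \N \, .
\]
Summing over $j \in \N$ yields a lower bound on the aggregate departures in terms of $\sum_{j \in \N} E_j \conv S_j(t)$.

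Next I would invoke the exact-service-curve identity $\sum_j D_j(t) = (\sum_j A_j) \conv C(t)$ with $C(t) = ct$. Evaluating the infimum in the min-plus convolution at $s = 0$ and using $A_j(0) = 0$ gives $(\sum_j A_j) \conv C(t) \le ct$, hence $\sum_{j \in \N} D_j(t) \le ct$. Combining with the per-flow lower bound produces $\sum_{j \in \N} E_j \conv S_j(t) \le \sum_{j \in \N} D_j(t) \le ct$, which is exactly \eqref{eq:SCED-sched-cond-nec}. The only subtlety, and therefore the main point to check carefully, is that the phrase ``SCED ensures each flow $j$ a minimum service curve $S_j$'' must be read as holding uniformly over all admissible arrival patterns and hence for the saturating one; once this reading is in place, the remainder of the argument is mechanical.
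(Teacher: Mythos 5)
Your proof is correct and takes essentially the same route as the paper's: instantiate saturating arrivals $A_j = E_j$, apply the assumed per-flow service-curve guarantee to get $D_j(t) \ge E_j \conv S_j(t)$, sum over flows, and bound the aggregate departures by $ct$. The paper phrases the argument as a contradiction while you chain the inequalities directly (and justify $\sum_j D_j(t) \le ct$ slightly more explicitly via the exact service curve at $s=0$), but these are immaterial differences.
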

For preemptive scheduling, the condition in~\eqref{eq:SCED-sched-cond-nec} is 
necessary and sufficient. Since, for non-preemptive scheduling,  the condition in 
\eqref{eq:SCED-sched-cond} is not always necessary (e.g., if there is only one flow), reducing the difference between~\eqref{eq:SCED-sched-cond} and~\eqref{eq:SCED-sched-cond-nec} requires knowledge of the number of flows and 
the service curves of each flow. 
\begin{proof}
Suppose that~\eqref{eq:SCED-sched-cond-nec} does not hold for some value of $t$. 
Let the arrivals saturate their envelopes, that is 
$A_j (\tau) = E_j (\tau)$ for all $0 \le \tau \le t$ for each $j \in {\cal N}$. 
Since, by assumption, each of the $\{S_j\}_{j \in {\cal N}}$ is a minimum service curve, we have  for each $j \in {\cal N}$ that 
\[
D_j (t) \ge E_j \conv S_j (t) \, . 
\]
Summing over all flows and using the violation of~\eqref{eq:SCED-sched-cond-nec}, 
we get
\[
\sum_{j \in {\cal N}} D_j (t) \ge \sum_{j \in {\cal N}} 
 E_j \conv S_j (t)  > c t \, . 
\]
However, this is not possible since the aggregate departures from all flows 
in the interval $[0, t]$ cannot exceed $c t$. 
\end{proof}

\section{Computations of SCED Deadlines}
\label{sec:deadline}

As we have seen, 
by avoiding the need to compute  pseudo-inverses, SCED deadlines in a max-plus setting are conceptually simpler and more intuitive than presented in  the literature on SCED \cite{Cruz-ICCCN95,SCED,SCED+,Book-Chang,Book-LeBoudec}. 

{\bf Delay Guarantees:} The max-plus service curve for guaranteeing a delay bound $d$ for traffic is simply $\gamma_S (\nu) = d$. This leads to the deadline computation 
\[
\Dl (\nu) = \max_{0 \le \kappa \le \nu} \{ T_A (\kappa) + d \}=  T_A (\nu) + d \, . 
\]  
That is, the  deadline is the sum of the arrival time and the delay bound. 
This deadline assignment corresponds to that 
of  Earliest-Deadline-First (EDF) scheduling. 

{\bf Rate Guarantees:} The computation of SCED deadlines for a rate guarantee with 
service curve $\gamma_S (\nu) = \tfrac{\nu}{R}$ requires a little bookkeeping at the start of a busy period of a flow. Here, a busy period of a flow is a maximal time interval  where the 
flow has a positive backlog $B$, with $B(t) = A(t) - D(t)$. 
The following computation assumes that all arrivals occur at the start of or within a busy period, and 
that the number of busy periods within any finite time interval is finite. This assumption does not hold for 
general fluid-flow traffic arrivals. In particular, if traffic arrives  at a constant rate and is served at the same rate,  
no backlog builds up, and, hence, there is no busy period. On the other hand, in practical scenarios, 
where arrivals occur in chunks of arbitrary size and the maximum service rate has an upper bound, any arrival 
creates a backlog, and, therefore, starts or falls into a busy period. 

Consider the arrival time $T_A (\nu)$ of a bit value $\nu$. We suppose the arrival 
occurs in a busy period that started at time $\underline{t}$, that is, 
$\underline{t} = \sup \{s \le t  \mid A (t) = D (t) \}$. 
 Let $\underline{\nu}$ be the bit that started the busy period, with arrival time $T_A (\underline{\nu}) = \underline{t}$, 
that is, $\underline{\nu} = \inf \{\kappa \mid T_A (\kappa) \ge \underline{t}\}$. 
For the delay $W$, defined as $W(\nu) = T_D (\nu) - T_A (\nu)$, we have 
\begin{align*}
W (\underline{\nu}) = 0 \, , \ \ W(\kappa) > 0 \, , \ \ \forall \kappa \in (\underline{\nu},  \nu ] \, . 
\end{align*}
(In general, it is possible that  $W (\kappa) > 0$ for all $\kappa \in (\underline{\nu},  \nu ]$ \cite[\S11.5]{Duality}. However, 
since the service curve $\gamma_S (\kappa) = \tfrac{\kappa}{R}$ does not allow a delay 
at the start of a busy period, we get $B(T_A (\underline{\nu})) =  W(\underline{\nu}) = 0$. )
With $T_D (\kappa) \le T_A \maxconv \gamma_S (\kappa)$ for all $\kappa$, we therefore have 
\begin{align}
T_A (\kappa) < T_A \maxconv \gamma_S (\kappa) \, , \ \ \forall \kappa \in (\underline{\nu},  \nu ] \, .
\label{eq:TAlessthanconv} 
\end{align}
Under these assumptions, the interval over which $T_A \maxconv \gamma_S (\nu)$ is computed can be 
reduced as given in the following lemma. 
\begin{lemma}
\label{lemma:last}
Given an arrival $\nu$ at time $t$ to a network element that offers the service curve 
$\gamma_S (\nu) = \tfrac{\nu}{R}$. If $\nu > \underline{\nu}$, 
then 
\[
T_A \maxconv \gamma_S (\nu) = \sup_{0 \le \kappa \le \underline{\nu}} \{ T_A (\kappa) + \frac{\nu - \kappa}{R} \} \, . 
\]
\end{lemma}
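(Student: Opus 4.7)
The plan is to show that the supremum defining $T_A\maxconv\gamma_S(\nu)$ cannot be approached at any $\kappa\in(\underline{\nu},\nu]$, so that the natural restriction to $[0,\underline{\nu}]$ loses nothing. The inequality $\sup_{0\le\kappa\le\nu}\{T_A(\kappa)+(\nu-\kappa)/R\}\ge\sup_{0\le\kappa\le\underline{\nu}}\{T_A(\kappa)+(\nu-\kappa)/R\}$ is immediate from the enlargement of the range, so the entire content lies in the reverse inequality.

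First I would set $f(\kappa)=T_A(\kappa)+(\nu-\kappa)/R$ on $[0,\nu]$ and argue that $f$ attains its maximum. Since $T_A\in\T_o$ is non-decreasing and right-continuous, and since $T_A(\nu)<\infty$ by hypothesis, the restriction of $T_A$ to $[0,\nu]$ is finite and upper semi-continuous (right-continuity handles $\kappa_n\downarrow\kappa$; monotonicity handles $\kappa_n\uparrow\kappa$ because $\lim T_A(\kappa_n)=T_A(\kappa^-)\le T_A(\kappa)$). Adding the continuous affine piece $(\nu-\kappa)/R$ preserves upper semi-continuity, and $[0,\nu]$ is compact, so $f$ attains its maximum $M=T_A\maxconv\gamma_S(\nu)$. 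Let $\kappa^\ast$ be any maximizer (e.g.\ the largest one, which exists because the level set $\{f=M\}$ is closed).

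Next I would rule out $\kappa^\ast\in(\underline{\nu},\nu]$ by appealing to the standing hypothesis \eqref{eq:TAlessthanconv}, which states $T_A(\kappa^\ast)<T_A\maxconv\gamma_S(\kappa^\ast)$. By definition of the max-plus convolution, there exists $\kappa'\in[0,\kappa^\ast]$ with $T_A(\kappa')+(\kappa^\ast-\kappa')/R>T_A(\kappa^\ast)$; since the choice $\kappa'=\kappa^\ast$ gives only $T_A(\kappa^\ast)$, necessarily $\kappa'<\kappa^\ast$. Adding $(\nu-\kappa^\ast)/R$ to both sides yields $f(\kappa')>f(\kappa^\ast)=M$, contradicting maximality of $\kappa^\ast$. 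Hence $\kappa^\ast\le\underline{\nu}$, giving $M=f(\kappa^\ast)\le\sup_{0\le\kappa\le\underline{\nu}}f(\kappa)$, which is the reverse inequality.

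The main obstacle I anticipate is the book-keeping needed to guarantee that the supremum in $T_A\maxconv\gamma_S(\nu)$ is actually attained, since $T_A$ need not be continuous: this is where upper semi-continuity is essential, and without it the contradiction step would only produce a sequence with $f$-values exceeding $M$ rather than a single offending point. Once the u.s.c.\ / compactness argument is in place, the rest is a one-line consequence of \eqref{eq:TAlessthanconv}.
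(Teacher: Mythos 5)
Your proof is correct and follows essentially the same route as the paper: establish that the supremum defining $T_A \maxconv \gamma_S(\nu)$ is attained at some $\kappa^\ast \in [0,\nu]$, then use \eqref{eq:TAlessthanconv} to derive a contradiction if $\kappa^\ast \in (\underline{\nu},\nu]$. The only difference is that the paper cites an external lemma from \cite{Duality} for the attainment of the supremum, whereas you prove it directly via upper semi-continuity of the right-continuous non-decreasing $T_A$ plus compactness of $[0,\nu]$ — a sound, self-contained substitute.
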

\begin{proof}
Since $T_A$ and $\gamma_S$ are right-continuous, by \cite[Lemma~4.1(9)]{Duality}, there exists a $\mu \in [0, \nu]$ such that 
\[
T_A \maxconv \gamma_S (\nu) = T_A (\mu)  + \frac{\nu - \mu}{R} \, . 
\]
If $\mu \in (\underline{\nu},\nu ]$, we get  
\begin{align*}
T_A \maxconv \gamma_S (\nu) & = T_A (\mu)  + \frac{\nu - \mu}{R} \\
& < \sup_{0 \le \kappa \le \mu}   \{ T_A (\kappa) + \frac{\mu - \kappa}{R} \} + \frac{\nu - \mu}{R} \\
& \le  \sup_{0 \le \kappa \le \nu}   \{ T_A (\kappa) + \frac{\nu - \kappa}{R} \}  \\
& = T_A \maxconv \gamma_S (\nu) \, , 
\end{align*}
In the second line, we used~\eqref{eq:TAlessthanconv}, and in the third line, we enlarged the 
range of the supremum. Obviously, there is a contradiction, and we can conclude that $\mu \le \underline{\nu}$. 
\end{proof}

Rewriting the result in Lemma~\ref{lemma:last} as 
\begin{align*}
T_A \maxconv \gamma_S (\nu) & = \sup_{0 \le \kappa \le \underline{\nu}}   \{ T_A (\kappa) + \frac{\underline{\nu} - \kappa}{R} \}
+  \frac{\nu - \underline{\nu}}{R} \\
& \hspace{-4mm} = \max \bigl[ \sup_{0 \le \kappa < \underline{\nu}} \{ T_A (\kappa) + \frac{\underline{\nu} - \kappa}{R} \}
, T_A (\underline{\nu}) \bigr] +  \frac{\nu - \underline{\nu}}{R} \\
& \hspace{-4mm}  = \max\{T_A \maxconv \gamma_S (\underline{\nu}^-) , \underline{t}) \} +  \frac{\nu - \underline{\nu}}{R} \, , 
\end{align*}
where we use the notation $x^- = \sup_{y < x} y$, 
we can construct a deadline assignment for 
the rate service curve. 
Let us add an index to the  busy periods so that $\underline{t}_n$ and $\underline{\nu}_n$ denote the start time and the first bit of the $n$th busy period of a flow. 
Then the deadline assignment in the $n$th busy period is given by 
\begin{align}
& \Dl (\nu) = \max \left( \Dl  (\underline{\nu}_n^-) , \underline{t}_n \right)
+ \frac{\nu - \underline{\nu}_n}{R} , \ \nu \in [\underline{\nu}_n,  \underline{\nu}_{n+1} ) \, . 
\label{eq:deadline-rate}
\end{align}
For the computation of the first busy period, we define $ \Dl  (\underline{\nu}_1^-) = -\infty$. This is consistent with our derivations since 
$\underline{\nu}_1= 0$ and, with $T_A \maxconv \gamma_S \in \T_o$, we get $T_A \maxconv \gamma_S (\nu)= -\infty$ for $\nu < 0$. 

This deadline assignment is easily implemented, since we must only keep track 
of the arrived bits in the current busy period. Consider the $n$th busy period which starts at 
$ \underline{t}_n$ with bit value $\underline{\nu}_n$. 
At the begin of a busy period, we take the larger of the current time ($ \underline{t}_n $) 
and the last assigned deadline ($\Dl  (\underline{\nu}_n^-)$). This value is added to $\frac{\mu}{R}$ to obtain the deadline of the $\mu$th bit 
in the busy period. 
Since $\nu$ in \eqref{eq:deadline-rate} is equal to $\nu = \mu + \underline{\nu}_n$ for 
$\nu \in [\underline{\nu}_n,  \underline{\nu}_{n+1} )$, the resulting deadline is equal to \eqref{eq:deadline-rate}.
Note that the start time of a busy period is simply the time of an arrival to an empty buffer.



{\bf Latency-Rate Guarantees:} 
We can combine the deadline assignment of  a delay server and  
a rate server to get the deadline assignment of a
latency-rate server. Let $\gamma_1(\nu) = \tfrac{\nu}{R}$ and  
$\gamma_2(\nu) = d$, the service curve of a latency-rate server is 
$\gamma_1 \maxconv \gamma_2 (\nu) =  \tfrac{\nu}{R} +d$.  
Since 
\[
T_A \maxconv ( \gamma_1 \maxconv \gamma_2 )  (\nu) = T_A \maxconv \gamma_1 \ (\nu) + d \, ,
\]
the deadline of $\nu$ for a latency-rate server is given by  
$\overline{\Dl} (\nu)$ from 
\begin{align}
\overline{\Dl} (\nu) = \Dl (\nu) + d \, , 
\label{eq:deadline-latency-rate}
\end{align}
where $\Dl (\nu)$ is the deadline  for $\gamma_1$ computed with~\eqref{eq:deadline-rate}. 


\balance
{\bf Piecewise linear convex service curve:}  The deadline computation with \eqref{eq:deadline-rate} and~\eqref{eq:deadline-latency-rate} can  be extended to piecewise linear convex max-plus service curves. A single segment of such a service curve has the form 
\[
\gamma_S (\nu) = \left[ \frac{\nu}{r} - e \right]^+
\]
for some  $r > 0$ and $e \ge 0$. 
For $e = 0$, this service curve is obviously a rate server, and the deadline computation from~\eqref{eq:deadline-rate} applies. For $e > 0$, we essentially have a delay correction with a negative value. Since the earliest deadline of a packet is its arrival time, we compute the deadline as 
\begin{align}
\overline{\Dl} (\nu) = \max \{ \Dl (\nu) -e , T_A (\nu) \} \, . 
\label{eq:deadline-convex-segment}
\end{align}
with ${\Dl}(\nu) $ from~\eqref{eq:deadline-rate}. We obtain a piecewise convex max-plus service curve with 
multiple segments, from 
\[
\gamma_S (\nu) = \max_{i=1, \ldots, N} \left\{ \left[ \frac{\nu}{r_i} - e_i \right]^+\right\} \, , 
\]
with 
$e_1 < e_2 < \ldots < e_N$ and $R_1 < R_2 < \ldots < R_N$. 
The deadline for the piecewise linear convex max-plus service curve 
is computed by $\overline{\Dl} (\nu)  = \max_{i=1, \ldots, N} \overline{\Dl}_i (\nu)$, where $\overline{\Dl}_i$ is the deadline computed for the $i$th segment. 

{\bf Traffic shaping:} The SCED principle is also applicable to  traffic shaping. A max-plus traffic envelope $\lambda_E \in \T_o$ realizes an exact service curve, with $T_D (\nu) = T_A \maxconv  \lambda_E(\nu)$. Here, the convolution  provides the time when the shaper releases bit $\nu$. 
We therefore refer to the max-plus convolution as 
the release time and denote it by $r\ell$, with  
\[
r\ell (\nu) =  T_A \maxconv  \lambda_E (\nu) \, . 
\] 
As an example, the max-plus envelope for a token bucket with 
rate $r$ and bucket size $b$ has the envelope 
$\lambda (\nu) = [ \tfrac{\nu}{r} - \tfrac{b}{r}]^+$. 
We compute the release times with~\eqref{eq:deadline-rate} and~\eqref{eq:deadline-convex-segment}, were we replace `$\Dl$' by `$r\ell$'.  

\section{Packetized Systems}
\label{sec:packet}

While a fluid-flow interpretation of SCED is perfectly aligned 
with network calculus theory, verifying and enforcing deadlines  
for each (real) value $\nu$ of a traffic flow is obviously not 
practical. In a packet system, 
each packet is assigned a single deadline, and all bits 
belonging to the same packet receive the same deadline. 
We now discuss  adjustments of SCED for  a packet-level system. 

Let $\ell_n$ denote the size of the $n$th packet ($n \ge 1$) of a flow  
and $L_n = \sum_{k=1}^N {\ell_n}$ the cumulative size of the first $n$ packets, 
with $L_o = 0$. The bits of the $n$th packet 
cover the range  $L_{n-1} \le \nu < L_n$. 
For a system with packet-level arrivals we have 
\[
T_A (\nu) = T_A^p (n) \, , \qquad \nu \in [L_{n-1 }, L_n) \, , 
\]
where $T_A^p (n)$ is the arrival time of packet $n$. 
With a  service curve for delays, $\gamma_S (\nu)=d$, the (fluid-flow) SCED deadline 
assignment according to~\eqref{eq:Deadline-assign} is 
\[
\Dl (\nu) = T_A^p (n) + d \, ,  \qquad \nu \in [L_{n-1 }, L_n) \, ,
\] 
hence a packet-level deadline assignment $\Dl^p (n) = T_A^p (n)+ d$ is congruent 
with the fluid-flow  assignment. 

With a rate-guarantee with $\gamma_S (\nu) = \tfrac{\nu}{R}$, fluid-flow SCED assigns each bit value of a packet 
a different deadline. For a packet-level system, we use \cite[Eq.~(12.14)]{Duality} which showed  
\[
T_A^p \maxconv \gamma_S (\nu) = \max \bigl\{ T_A \maxconv \gamma_S (L_{k-1}^-) ,  
T^p_A (k)  \bigr\} + \frac{\nu-L_{k-1}}{R} \, ,   
\]
for $\nu \in [L_{n-1 }, L_n)$.
Hence, a packet-level deadline assignment 
\begin{align}
\Dl^p (n) = \max \bigl\{ \Dl^p (n-1) ,  T^p_A (k)  \bigr\} + \frac{\ell_n}{R} \, ,   
\label{eq:VC-deadline}
\end{align}
relates to the fluid-flow assignment $\Dl (\nu)$ by 
\[
\Dl (\nu) \le \Dl^p (n) - \tfrac{\ell_n}{R}  \, , \qquad \nu \in [L_{n-1 }, L_n) \, .
\]
Therefore, by adjusting the service curve by an additional delay, yielding
$\gamma_{S'}(\nu) = \tfrac{\nu+ \ell_{\max}}{R}$, where $\ell_{\max}$ is the maximum packet size of the flow, 
the deadline $\Dl' = T_A \maxconv \gamma_{S'}$ 
satisfies  
\[
\Dl' (\nu) \le \Dl^p (n)   \, , \qquad \text{if } \nu \in [L_{n-1 }, L_n) \, .
\]
We see that the packet-level assignment in~\eqref{eq:VC-deadline} meets all deadlines of a 
fluid-flow assignment with the adjusted rate service curve~$\gamma_{S'}$. The deadline assignment in~\eqref{eq:VC-deadline} is of course that of the 
VirtualClock scheduling algorithm in~\cite{VirtualClock90}.

\section{Conclusions}
By resorting to max-plus algebra for the computation of deadlines, 
we showed that a SCED scheduler can be much simplified. 
The computation of deadlines for latency-rate and piecewise-linear convex   
max-plus service curves only requires state information on the start time and the traffic served in the current busy period. A packet-level algorithm for deadline computations 
emphasized the relationship of SCED for  rate-based service curves and 
 VirtualClock scheduling. 

\section*{Acknowledgements}
This work is supported in part by the Natural Sciences and Engineering Research Council of Canada (NSERC). The author would like to thank Almut Burchard and Natchanon Luangsomboon for their discussions and feedback.

\bibliographystyle{plain}

\end{document}